\newcommand{\Sc}{\mathcal{S}}
\newcommand{\Pc}{\mathcal{P}_{C}} %set of poles
\newcommand{\R}{\mathbb{R}}
\newcommand{\md}{\operatorname{d}}
\newcommand{\from}{\colon}
\newcommand{\supp}[1]{\text{supp}\left\{#1\right\}}
\renewcommand{\imath}{\mathrm{i}}
\renewcommand{\Re}{\mathrm{Re}}
\renewcommand{\Im}{\mathrm{Im}}
\newcommand{\Real}{\mathbb{R}}
\newcommand{\Cc}{\mathbb{C}}
\newcommand{\ord}[2]{\mathrm{ord}\left(#1,#2\right)}
\newcommand{\res}{\qopname\relax m{\mathrm{res}}}
\newcommand{\diff}[2]{\frac{\md #2}{\md #1}}
\newcommand{\defMath}{\coloneqq}
\def\qft{QFT\xspace}
\newcommand{\spcTm}{\ensuremath{M}}
\newcommand{\sTDim}{\ensuremath{d}}
\theoremstyle{plain}
\newtheorem{theorem}{Theorem}[section]
\newtheorem{lemma}[theorem]{Lemma}
\newtheorem{prop}[theorem]{Proposition}
\newtheorem{definition}[theorem]{Definition}
\newtheoremstyle{dotless}{}{}{\itshape}{}{\bfseries}{}{ }{}
\theoremstyle{dotless}
\newtheorem*{thm*}{Theorem}
\title{Reflection Positivity in higher derivative scalar theories}
\author[F. Arici]{Francesca Arici}
\author[D. Becker]{Daniel Becker}
\author[C. Ripken]{Chris Ripken}
\author[F. Saueressig]{Frank Saueressig}
\author[W. D. van Suijlekom]{Walter D. van Suijlekom}
\address{Institute for Mathematics, Astrophysics and Particle Physics (IMAPP), Radboud University Nijmegen, Heyendaalseweg 135, 6525 AJ Nijmegen, The Netherlands}
\begin{document}

\begin{abstract}
Reflection positivity constitutes an integral prerequisite in the Osterwalder-Schrader reconstruction theorem which relates quantum field theories defined on Euclidean space to their Lorentzian signature counterparts. In this work we rigorously prove the violation of reflection positivity in a large class of free scalar fields with a rational propagator. This covers in particular higher-derivative theories where the propagator admits a partial fraction decomposition as well as degenerate cases including e.g. $p^4$-type propagators. 
\end{abstract}

\maketitle

\tableofcontents

\section{Introduction, main result, and physics implications}\label{sec:intro}
The requirement of unitarity provides stringent constraints on the structure of a fundamentally admissible theory. 
A prototypical example for this
is provided by higher-derivative gravity which despite being perturbatively renormalizable \cite{Stelle:1976gc}
is not considered as a viable quantum theory for gravity due to unitarity violation \cite{Stelle:1977ry}. 
At the level of constructive quantum field theory (QFT) on a \emph{Lorentzian} manifold a precise notion
of a QFT has been given in form of the Wightman axioms \cite{Wightman:1956zz}. The Osterwalder--Schrader
reconstruction theorem \cite{OstSch73} then allows to construct a QFT satisfying
the Wightman axioms from a probabilistic theory defined on a \emph{Euclidean} manifold $\spcTm$, provided
that the latter satisfies the Osterwalder--Schrader axioms. 
Since, in the latter case, the functional calculus is under much better control, due to positive definiteness of the inner product on $\spcTm$ and ellipticity of the Laplace operator, probability theories play a crucial role in the treatment of \qft{}s. 

In general the Osterwalder--Schrader axioms are defined for a generic partition function $Z[J]$.
While the axioms of Euclidean invariance, analyticity and regularity do not refer to the concept of time, reflection
positivity equips $\spcTm$ with a preferred direction which is essential in
the transition to Lorentzian spacetime. In this paper we investigate the central Osterwalder--Schrader axiom, reflection positivity, for a class of free scalar theories with a positive, analytic 
\emph{covariance operator} $C\equiv C(\Delta)$ where $\Delta \equiv - \partial^2$ is the $d$-dimensional Laplacian.
 In this case $\ln Z[J] = - \left\langle J, C J \right\rangle/2$
can be expressed as a suitable scalar product including the covariance operator $C$, which drastically simplifies the analysis.
Given the fact that reflection positivity has been proven
for notoriously few theories including the free Klein--Gordon operator $(\Delta + m^2)^{-1}$ or
the Dirac operator (see for instance \cite{OstSch73,JaRi08}), this constitutes a valid starting point.

The main result of this article is Theorem \ref{thm:main} which provides a criterion on $C$ to be reflection positive and thus gives rise to a unitary and vacuum stable \qft.

The central ingredient thereby is the operation of time reflection $\theta\from \Real^\sTDim \to \Real^\sTDim$, given by $(t,\vec{x}) \mapsto (-t,\vec{x})$ on a flat Euclidean spacetime $\spcTm\equiv \Real^{\sTDim}$.
Based on this, we define the operator $\Theta$ on the space of square-integrable functions $L^2(\Real^\sTDim)$ by pullback of $\theta$, i.e. for $f \in L^2(\Real^\sTDim)$, we define
\[
			\Theta f
	\defMath
			f \circ \theta
	\text{.} \nonumber
\]

We follow the conventions of \cite{OstSch73}: we define $\Sc(\mathbb{R}_+)$ to be the space of functions $f \in \Sc(\R)$ with $\supp f \subseteq \R_{+} \coloneqq [0, \infty)$
By $\Sc (\R^n_+)$ we denoted the completed topological tensor product
$\Sc(\R^n_{+}) \coloneqq \Sc(\R^{n-1}) \hat{\otimes} \Sc(\mathbb{R}_+)$, i.e., a function $f \in \Sc(\R^{n}_+) $ if $f \in \Sc(\R^{n})$ and $\supp f \subseteq \lbrace x \in
\R^n, \ x^0 \geq 0 \rbrace$.

In the case of a free theory, reflection positivity can be rephrased into the following definition which we employ in the sequel of this paper:
\begin{definition}[Reflection positivity]
	Let $C$ be a covariance operator that commutes with $\Theta$. Then $C$ is said to satisfy reflection positivity if for all $J \in \mathcal{S}(\mathbb{R}^n_{+})$, we have the inequality
	\[
				I[J]
		\defMath
				\left\langle
					J,
					C \Theta J
				\right\rangle \geq 0
		\text{.}
	\]
\end{definition}

Based on this definition we derive necessary and sufficient conditions for a large class of free propagators to satisfy reflection positivity.

\begin{thm*}{\bf \ref{thm:main}}
	Let $C$ be a real rational function which has no poles on $\Real_+$.
A necessary and sufficient condition for the operator $C(\Delta)$ to satisfy reflection positivity is that the poles of $C$ all lie on $\R_{-}$, are simple and with non-negative residue.
\end{thm*}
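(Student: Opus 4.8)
The plan is to pass to momentum space, where $C(\Delta)$ acts by multiplication by $C(|p|^2)$ and $\Theta$ reflects the energy $p^0$, reduce the $\sTDim$-dimensional problem to a one-dimensional one in the time direction, and finally identify reflection positivity with positive semidefiniteness of a Hankel-type quadratic form, which can be read off from the poles of $C$. Concretely, Plancherel's theorem together with $\widehat{\Theta J}(p^0,\vec p)=\hat J(-p^0,\vec p)$ and $|p|^2=(p^0)^2+|\vec p|^2$ give
\[
  I[J]=\int_{\R^{\sTDim}}\overline{\hat J(p^0,\vec p)}\;C\bigl((p^0)^2+|\vec p|^2\bigr)\;\hat J(-p^0,\vec p)\,\md p^0\,\md\vec p .
\]
The two factors carry the \emph{same} spatial momentum $\vec p$, so the integral splits over $\vec p$ into one-dimensional integrals in $p^0$ governed by the shifted covariance $w\mapsto C(w+|\vec p|^2)$. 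I will also use repeatedly that a polynomial part of $C$ contributes nothing: in a decomposition $C=Q+\tilde C$ with $Q$ a polynomial and $\tilde C$ proper rational, $Q(\Delta)$ is a local operator, so $Q(\Delta)\Theta J$ is supported in $\{x^0\le 0\}$ while $J$ is supported in $\{x^0\ge 0\}$, and their continuous product is supported on the hyperplane $\{x^0=0\}$ and hence vanishes identically.

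For sufficiency I would use the partial-fraction decomposition $C(z)=Q(z)+\sum_j r_j/(z+m_j^2)$, which under the hypotheses of the theorem has $m_j^2>0$ and $r_j\ge 0$. The polynomial term $Q(\Delta)$ drops out as just explained, and each remaining summand is $r_j\ge 0$ times the Klein--Gordon covariance $(\Delta+m_j^2)^{-1}$, which is reflection positive by the classical computation recalled in \cite{OstSch73,JaRi08}. Adding the non-negative contributions gives $I[J]\ge 0$.

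For necessity, assuming $C(\Delta)$ is reflection positive, I would first concentrate the spatial momentum of the test function near $\vec p=0$ in the displayed identity, which forces the one-dimensional operator $C(-\partial_t^2)$ to be reflection positive as well. Discarding the irrelevant polynomial part, $C(p^2)$ is integrable and its inverse Fourier transform, restricted to $(0,\infty)$, is a finite sum $G(\tau)=\sum_k P_k(\tau)\,\ee^{\imath p_k\tau}$ over the poles $p_k$ of $p\mapsto C(p^2)$ in the open upper half-plane --- each with $\Im p_k>0$, since $C$ has no poles on $\R_{+}$ and hence $C(p^2)$ none on the real axis, and with $\deg P_k+1$ equal to the order of the pole. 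A short computation rewrites the one-dimensional functional as the Hankel form $I[f]=\iint_{(0,\infty)^2}\overline{f(t)}\,G(t+t')\,f(t')\,\md t\,\md t'$, so reflection positivity is exactly positive semidefiniteness of the kernel $(t,t')\mapsto G(t+t')$ on $(0,\infty)$. By the classical description of such kernels --- Bernstein's theorem on exponentially convex functions, or, concretely, the observation that sampling $G$ along an arithmetic progression produces positive semidefinite Hankel matrices of every size, which for an exponential polynomial forces the underlying finite measure to be a non-negative combination of point masses on the real line --- this holds if and only if $G(\tau)=\sum_k c_k\,\ee^{-\lambda_k\tau}$ with $\lambda_k>0$ and $c_k\ge 0$. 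Unwinding the Fourier transform, this is precisely the statement that every pole $p_k$ of $C(p^2)$ is purely imaginary with positive imaginary part (equivalently, the poles of $C$ lie on $\R_{-}$), that all of them are simple, and that the associated residues are non-negative.

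The step I expect to be the main obstacle is the necessity direction, where it is essential to use the \emph{full} positive semidefiniteness of the Hankel form: a pointwise positivity condition on $C$ along $\R_{+}$, or equivalently testing only against single exponentials $f_\mu=\ee^{-\mu t}$, does not suffice to detect, for instance, a double pole --- for $C(z)=(z+m^2)^{-2}$ the diagonal values of the quadratic form remain non-negative even though reflection positivity fails. Making the structure theorem for positive semidefinite Hankel kernels effective enough to separate genuine decaying exponentials from non-real exponents and from polynomial prefactors is the technical heart of the proof.
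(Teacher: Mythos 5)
Your proposal is correct in outline, and on the necessity side it takes a genuinely different route from the paper. The sufficiency half is essentially the paper's Proposition \ref{prop:if-part}: partial fractions, discard the polynomial part by locality of $p(\Delta)$ and the support argument, and quote reflection positivity of each Klein--Gordon covariance. For necessity, the paper stays in $d$ dimensions and is constructive: it derives a contour-integral formula for $I_w[J]$ (Lemma \ref{lm:massshell}), a homogeneity property $I_w[q(\Delta)J]=\overline{q(\overline w)}q(w)I_w[J]$ (Lemma \ref{lm:ihomogen}) and non-triviality of $I_w[J]$ (Lemma \ref{lm:nontriviality}), then uses polynomials $q(\Delta)$ to isolate a single offending pole (Lemma \ref{lem:multiplePoles}) and exhibits an explicit $J$ with $I[J]<0$ for complex poles, higher-order real poles, and negative residues (Propositions \ref{prop:complexRPVio}, \ref{prop:RealHigOrd}). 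You instead reduce to one dimension by concentrating the spatial momentum (a dominated-convergence limit that is fine since the proper rational part of $C$ is bounded on $\R_+$), rewrite the one-dimensional functional as the Hankel form with kernel $G(t+t')$, and invoke the structure theory of exponentially convex functions; this buys a conceptually clean characterization (reflection positivity $\Leftrightarrow$ $G$ is the Laplace transform of a non-negative measure) and avoids case-by-case constructions, at the price of importing a classical theorem and of never producing an explicit violating test function. Two caveats: the result you want is Widder's representation theorem for exponentially convex functions (Bernstein's theorem proper concerns completely monotone functions), and the identification step you flag as the ``technical heart'' does need to be written out --- finiteness of the representing measure (e.g.\ Fatou as $\tau\to 0^+$, using continuity of $G$ at $0$), holomorphy of both representations on $\{\Re\tau>0\}$ and uniqueness of bilateral Laplace transforms, and linear independence of the functions $\tau^n\ee^{-w\tau}$ to rule out cancellations between poles --- which is exactly the work the paper's homogeneity and pole-isolation lemmas do in its constructive scheme. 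With those points filled in, your argument stands as a valid alternative proof.
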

The key point in the proof is establishing a relation between reflection positivity of $C(\Delta)$ and its pole structure.
We find that the inner product $I[J]\equiv 	\left\langle J,	C \Theta J\right\rangle$  is a sum over the residues at the poles deformed by some $J$ and pole dependent prefactor.
Hence, the number of poles, their degree and their type --- in particular whether they are real or complex --- determine if $I[J]$ is non-negative for all $J$.

For simple real poles with non-negative residue, we prove in Proposition \ref{prop:if-part} intactness of reflection positivity, following an argument in \cite{JaRi08}.
In case of complex poles (cf. Proposition \ref{prop:complexRPVio}) as well as non-simple real poles (cf. Proposition \ref{prop:RealHigOrd}), we construct an explicit $J$ for which $I[J]$ becomes negative, hence showing violation of reflection positivity.

Our results clarify important questions related to the vacuum stability and unitarity of QFTs which are expected to be essential in any fundamental description of nature. Based on the classical theorem by Ostrogradski \cite{Ostrogradsky:1850}, it is commonly expected that (inverse) propagators $C^{-1} = p_n(\Delta)$ where $p_n$ denotes a polynomial of degree $n \ge 2$ exhibit instabilities or unitarity violation. Theorem \ref{thm:main} provides the quantum version of this classical analysis. It establishes an explicit relation between reflection positivity and the pole structure of $C(\Delta)$. Reflection positivity is retained if and only if each pole in $C$ corresponds to a standard free field propagator with a positive squared mass.
 This incorporates the analysis \cite{JaRi08,GliJa79,JaRi07I,JaRi07II}, showing 
that the Klein-Gordon operator $C(p^2) = (p^2 + m^2)^{-1}$ obeys reflection positivity. At the same time our theorem shows that theories where $C(\Delta)$ is a genuine rational function containing higher powers of the Laplacian violate reflection positivity and thus do not satisfy the prerequisites of the Osterwalder--Schrader reconstruction theorem. This comprises in particular higher derivative theories where the propagator admits a partial fraction decomposition which are unstable within the classical Ostrogradski theorem and propagators possessing poles of order higher than one. 

Notably, the case where $C(\Delta) = (\Delta + m^2)^{-1} f(\Delta)$ with $f(\Delta)$ being an entire, analytic function of the Laplacian satisfying suitable fall-off conditions
is covered only partially by our theorem. Such structures underly the program of non-local theories of gravity reviewed in \cite{Modesto:2017sdr} and may also appear
naturally in the context of noncommutative geometry \cite{Chamseddine:2006ep,Iochum:2011yq,Sui12} or Asymptotic Safety \cite{Niedermaier:2006wt,Reuter:2012id,Roberto:book,Becker:2017tcx}.

The present work may be continued along several lines. Clearly, it would be interesting to relax our assumptions on the analytic properties of $C$ in order to cover the entire class of non-local propagators. Along different lines, one may consider non-scalar and interacting fields. The generalization for non-scalar fields may go along the same lines as in \cite{JaRi08}. Interacting fields may prove to be more problematic, since reflection positivity does not reduce to a simple positivity condition of an inner product. Finally, one may try to generalize Theorem \ref{thm:main} to non-flat spacetimes. As a first step in this direction one might consider reflection positivity on a non-flat background equipped with a foliation structure. For the case of the Klein--Gordon operator, the corresponding generalization of the Osterwalder--Schrader axioms has been considered in \cite{JaRi08,JaRi07I,JaRi07II}.

\subsection*{Acknowledgements}
We would like to thank R. Wulkenhaar for encouraging comments at the beginning of this project and R. Alkofer for many fruitful discussions. The work of D.~B., C.~R., F.~S., and W.~v.~S.\ is supported by the Netherlands Organisation for Scientific
Research (NWO) within the Foundation for Fundamental Research on Matter (FOM) grant 13VP12. F.~A and W.~v.~S. are partially supported by NWO under the VIDI-grant 016.133.326.

% % % % % % % % % % % % % % % % % % % % % % % % % % % % % % % % % % % % % % % % % % % % % % % % % % % % % % % % % % % % % % % % % % % % % % % % % % % % % % % % % % % % % % % % % % % % 

\section{Setting and sufficient conditions for reflection positivity}
Our starting point is a real rational function $C$ that will describe the structure of our propagator. Such a function admits a unique representation in terms of its \emph{partial fraction decomposition} (cf. \cite[Corollary 5.5.4]{Zor})
\begin{equation}
\label{eqn:polesum}
C(x) = \sum_{j=1}^N \left( \sum_{n=1}^{k_j} a_{jn}(x-z_j)^{-n}\right) + p(x),
\end{equation}
where $p$ is a real polynomial. We say that $z_j \in \Cc$ is a \emph{pole} of order $k_j$ for $C$ and we write
$\ord{C}{z_j}= k_j $.
We denote the set of poles of $C$ by $\Pc$. In the following we will assume that $C$ has no poles on the positive real line, i.e. $\Pc \cap \R_+ = \emptyset$.\footnote{In the physics setting such poles correspond to particles with a \emph{negative} squared mass. Since the existence of such a particle would allow to obtain an infinite amount of energy by creating such particles from the vacuum, this case is excluded on physical grounds.}
The coefficients appearing in the expansion can be computed as residues, by looking at $C$ as a complex function:
\begin{equation}
\label{eqn:res}
a_{jn}\coloneqq\res_{z\to z_j}	\left( \left(z-z_j\right)^{n-1}	C(z) \right).
\end{equation}
The reality condition $\overline{C(z)}=C(\overline{z})$ implies that if $z_j$ is a complex pole of order $k_j$, then $\overline{z}_j$ is also a pole of order $k_j$.

Using Fourier transforms and the pseudodifferential calculus on $\R^d$ (cf. \cite{Gr08,Str03}), together with the fact that $C$ has no poles on $\R_+$, we can write the associated covariance operator $C(\Delta)$ as a sum of powers of resolvents plus a local operator $p(\Delta)$

\begin{equation}
C(\Delta) = \sum_{j=1}^N \left( \sum_{n=1}^{k_j} a_{jn}(\Delta-z_j)^{-n}\right) + p(\Delta),
\end{equation}
acting on the Schwartz space $\Sc(\R^d_+)$.

For later reference, let us remark that the operator $C(\Delta)$ can be decomposed into the sum of simpler operators, coming from separate poles:
\begin{align*}
C_\lambda (\Delta) & =   \sum_{n=1}^{k} \left( a_{n}(\Delta-\lambda)^{-n} + \overline{a}_{n}(\Delta-\overline{\lambda})^{-n} \right)  & \lambda \in \Cc \setminus \Real, \ \ord{C}{\lambda} = k \\
C_\mu (\Delta) & =   \sum_{n=1}^{k}  a_{n}(\Delta-\mu)^{-n} & \mu  \in \R_{-}, \ \ord{C}{ \mu} = k \\
\end{align*}
Let $J \in \Sc(\R^d_{+})$. For any $w \in \Cc\setminus \R_+$, we denote by $I_{w}[J]$ the Klein--Gordon propagator with (possibly complex) mass $w$:
\begin{equation}
\label{eqn:Izj}
			I_{w}[J]
	=
			\left\langle J,	\Theta	\left(\Delta -w\right)^{-1}	J\right\rangle .
		\end{equation}

			\begin{prop}
			Let $C$ be a real rational function with partial fraction decomposition \eqref{eqn:polesum}. Then $I[J]$ for the covariance operator $C(\Delta)$ can be written as
\begin{equation}
\label{eqn:integralPoleRep2}
	I[J]
	=
			\sum_{j=1}^{N}	\sum_{n=1}^{k_j}	a_{jn}	\frac{1}{(n-1)!}	\diff{z_j^{n-1}}{^{n-1}} I_{z_j}[J]
	\text{,}
\end{equation}
\end{prop}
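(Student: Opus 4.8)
The plan is to reduce the statement to the elementary fact that, for a simple pole at $z_j$, the coefficient $a_{j1}$ times $I_{z_j}[J]$ already reproduces the corresponding term, and that higher-order poles are recovered by differentiating the resolvent with respect to the mass parameter. Concretely, I would start from the partial fraction decomposition \eqref{eqn:polesum} and the induced operator decomposition $C(\Delta) = \sum_{j,n} a_{jn}(\Delta - z_j)^{-n} + p(\Delta)$, and insert this into $I[J] = \langle J, C(\Delta)\Theta J\rangle$. By linearity of the inner product, $I[J]$ splits as $\sum_{j=1}^N \sum_{n=1}^{k_j} a_{jn}\langle J, \Theta(\Delta - z_j)^{-n} J\rangle + \langle J, \Theta\, p(\Delta) J\rangle$. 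The first task is therefore to identify $\langle J, \Theta(\Delta - z_j)^{-n} J\rangle$ with $\frac{1}{(n-1)!}\,\partial_{z_j}^{n-1} I_{z_j}[J]$, and the second is to argue that the polynomial piece $p(\Delta)$ contributes nothing.

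For the resolvent powers, the key identity is $(\Delta - w)^{-n} = \frac{1}{(n-1)!}\,\frac{\md^{n-1}}{\md w^{n-1}}(\Delta - w)^{-1}$, which holds at the level of bounded operators (or of the symbols, via the pseudodifferential calculus already invoked in the excerpt) since $w \notin \R_+ = \operatorname{spec}(\Delta)$, so that the resolvent is a holomorphic operator-valued function of $w$ in a neighbourhood of each $z_j$. Pairing with $J$ and $\Theta J$ and using that the inner product is continuous — hence commutes with the $w$-derivative — gives $\langle J, \Theta(\Delta - z_j)^{-n} J\rangle = \frac{1}{(n-1)!}\,\partial_{z_j}^{n-1}\langle J, \Theta(\Delta - z_j)^{-1} J\rangle = \frac{1}{(n-1)!}\,\partial_{z_j}^{n-1} I_{z_j}[J]$, which is exactly the summand in \eqref{eqn:integralPoleRep2}. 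To make the differentiation-under-the-pairing step rigorous, the cleanest route is to pass to Fourier space: $I_w[J] = \int_{\R^d} \overline{\widehat{J}(p)}\,\widehat{\Theta J}(p)\,(p^2 - w)^{-1}\,\md p$ (up to normalisation conventions), where $(p^2 - w)^{-1}$ is smooth and rapidly-in-$w$-uniformly bounded on the support considerations dictated by $\widehat J \in \Sc$, so that dominated convergence licenses $\partial_w^{n-1}$ passing through the integral and produces $(n-1)!\,(p^2-w)^{-n}$.

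It remains to dispose of $p(\Delta)$. Since $p$ is a polynomial, $p(\Delta)$ is a local (differential) operator, and the claim is that $\langle J, \Theta\, p(\Delta) J\rangle = 0$ for $J \in \Sc(\R^d_+)$ — or, more precisely, that this term does not appear in \eqref{eqn:integralPoleRep2} because the proposition is implicitly read modulo it, or because it vanishes. I expect this to be the main obstacle, or at least the point needing the most care: $\Theta$ maps $\Sc(\R^d_+)$ to functions supported in $\{x^0 \le 0\}$, and $p(\Delta)J$ is still supported in $\{x^0 \ge 0\}$ because $\Delta$ is local, so the two factors in $\langle J, \Theta p(\Delta)J\rangle$ have supports meeting only on the measure-zero hyperplane $\{x^0 = 0\}$; hence the pairing vanishes. (The same remark does \emph{not} kill the resolvent terms, since $(\Delta - z_j)^{-n}$ is non-local and spreads the support across the hyperplane.) I would state this support argument carefully, noting that $\langle\,\cdot\,,\,\cdot\,\rangle$ here is the $L^2$ inner product and that a tempered distribution supported on a hyperplane paired against a Schwartz function supported in the complementary closed half-space still integrates to zero provided one is slightly careful about the boundary — which is the one genuinely delicate point and is handled exactly as in the Klein–Gordon case of \cite{OstSch73,JaRi08}. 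Assembling the three pieces — linearity, the $w$-derivative identity for resolvent powers, and the vanishing of the local term — yields \eqref{eqn:integralPoleRep2}.
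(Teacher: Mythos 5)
Your proposal is correct and follows essentially the same route as the paper: substitute the partial fraction decomposition into $I[J]$, use the resolvent identity $(\Delta-w)^{-n}=\frac{1}{(n-1)!}\frac{\md^{n-1}}{\md w^{n-1}}(\Delta-w)^{-1}$ (which the paper uses implicitly when writing the derivative form), and kill the polynomial term by locality of $p(\Delta)$ together with the support argument $\supp J\cap\supp\Theta J\subseteq\{x^0=0\}$. The only difference is that you spell out the holomorphy/differentiation-under-the-integral justification and worry about the boundary hyperplane, which is actually harmless here since both factors in the pairing are honest functions, so the product vanishes almost everywhere.
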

\begin{proof}
Substituting the expression for $C(\Delta)$ into $I[J]$ we obtain	\begin{equation}
	I[J]
	=
		\,\langle J, p(\Delta)\, \Theta J\rangle
			+
			\sum_{j=1}^{N}	\sum_{n=1}^{k_j}	a_{jn}	\frac{1}{(n-1)!}	\diff{z_j^{n-1}}{^{n-1}}	\left\langle J,	\Theta	\left(\Delta -z_j\right)^{-1}	J\right\rangle
	\text{.}
\end{equation}
Since $p$ is a polynomial, $p(\Delta)$ is a local operator, thus the inner product $\langle J, \Theta p(\Delta) J \rangle $  vanishes since for $J\in \mathcal{S}(\mathbb{R}^n_{+})$, $\supp J \cap \supp {\Theta J} = \lbrace 0 \rbrace$.
\end{proof}

Having established a connection between reflection positivity and the pole structure of the propagator, we can prove the \emph{if} part of our main theorem.
\begin{prop}
\label{prop:if-part}
A sufficient condition for $C(\Delta)$ to satisfy reflection positivity is that the poles of $C$ all lie on $\R_{-}$, are simple and with non-negative residue.
\end{prop}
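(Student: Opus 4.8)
The plan is to reduce, via the linearity established in the preceding Proposition, to the case of a single simple pole $\mu\in\R_-$ with non-negative residue $a>0$, so that it suffices to show $I_\mu[J]=\langle J,\Theta(\Delta-\mu)^{-1}J\rangle\geq 0$ for all $J\in\Sc(\R^d_+)$; since $\mu<0$ we may write $\mu=-m^2$ with $m>0$, and this is precisely the statement that the Euclidean Klein--Gordon propagator with mass $m$ is reflection positive. First I would pass to Fourier space in the spatial variables $\vec x$, leaving the time variable $x^0$ in position space, and use the explicit form of the one-dimensional resolvent kernel: for fixed spatial momentum $\vec p$, the operator $(\Delta+m^2)^{-1}$ acts in the time direction as convolution with $\tfrac{1}{2\omega}\ee^{-\omega|x^0-y^0|}$, where $\omega=\omega(\vec p)=\sqrt{\vec p^2+m^2}$.

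Next, I would insert the time reflection $\Theta$, which sends $y^0\mapsto -y^0$; because $J$ is supported in $x^0\geq 0$, both $x^0\geq 0$ and $y^0\geq 0$ in the relevant integration region, so $|x^0-(-y^0)|=x^0+y^0$ and the kernel factorizes as $\ee^{-\omega x^0}\ee^{-\omega y^0}$. Writing $\widetilde J(x^0,\vec p)$ for the partial Fourier transform, one then gets
\begin{equation}
I_\mu[J]=\int \frac{\md^{d-1}\vec p}{(2\pi)^{d-1}}\;\frac{1}{2\omega(\vec p)}\left|\int_0^\infty \ee^{-\omega(\vec p)\,x^0}\,\widetilde J(x^0,\vec p)\,\md x^0\right|^2 ,
\end{equation}
which is manifestly non-negative. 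Summing over all poles with the non-negative coefficients $a_{j1}=a_j\geq 0$ gives $I[J]=\sum_j a_j I_{\mu_j}[J]\geq 0$, which is the claim. This is the argument of \cite{JaRi08} referred to in the text.

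The only genuinely delicate points are analytic rather than conceptual: one must justify that $(\Delta-\mu)^{-1}$, defined through the pseudodifferential/Fourier calculus, really does have the claimed convolution kernel in the time direction and that all the Fubini-type interchanges of integration are legitimate on $\Sc(\R^d_+)$ — here the Schwartz decay of $J$ together with $\omega(\vec p)\geq m>0$, which makes $\ee^{-\omega x^0}$ bounded and the $\vec p$-integral convergent, does the work. I would also note explicitly that reality of $C$ forces the residue at a real pole to be real, so the hypothesis ``non-negative residue'' is meaningful, and that the local polynomial part $p(\Delta)$ has already been shown to contribute zero. I expect the main obstacle, if any, to be bookkeeping the partial Fourier transform carefully enough that the factorization of the exponential kernel on the support of $J$ is rigorous; everything after that is a square.
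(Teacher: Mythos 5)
Your proposal is correct and follows essentially the same route as the paper: both use the partial-fraction decomposition from the preceding Proposition to write $I[J]=\sum_j a_j I_{\mu_j}[J]$ with $a_j\geq 0$ the residues, and conclude from the non-negativity of the Klein--Gordon form $I_{\mu_j}[J]$. The only difference is that the paper simply cites \cite{JaRi08,GliJa79,JaRi07I,JaRi07II} for $I_{\mu}[J]\geq 0$, whereas you spell out that standard argument (partial spatial Fourier transform, resolvent kernel $\tfrac{1}{2\omega}\ee^{-\omega|x^0-y^0|}$, and factorization of $\ee^{-\omega(x^0+y^0)}$ into a manifest square on supports with $x^0\geq 0$), which is precisely the argument of the cited references.
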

\begin{proof}
In \cite{JaRi08,GliJa79,JaRi07I,JaRi07II}, it is shown that, for all $J \in \Sc (\R^d_+)$ the integral $I_{\mu}[J] \geq 0$, $\mu \in \R$, is non-negative.
If all poles are simple and in $\R_-$, we can write \[	I[J]
	=
			\sum_{j=1}^N	a_j	I_{\mu_j}[J]
	\text{,}
\]
with $a_j = \res_{z \to \mu_j}	C(z)$. The claim then follows from the assumption that all residues are non-negative.
\end{proof}

The rest of the paper is devoted to showing that reflection positivity the sufficient condition spelled out in Proposition \ref{prop:if-part} is actually a \emph{necessary} condition. Namely, we will show that reflection positivity is violated whenever there are complex poles, real poles of higher order, or real poles with negative residue.

% % % % % % % % % % % % % % % % % % % % % % % % % % % % % % % % % % % % % % % % % % % % % % % % % % % % % % % % % % % % % % % % % % % % % % % % % % % % % % % % % % % % % % % % % % % % 

\section{Necessary conditions for reflection positivity}

\subsection{Properties of the Klein--Gordon quadratic form}
A key role in the discussion of this subsection is played by the Klein--Gordon propagator and its associated integral $I_{z}[J]$ as defined in eq. \eqref{eqn:Izj}.

In this subsection we collect three lemmata that will allow us to construct a function $J\in \Sc(\R^d_{+})$ such that $I_{z}[J]$ and its derivatives strongly violate positivity.

\begin{lemma}
\label{lm:massshell}
	Let $w \in \Cc \setminus \R_+$, $J\in \Sc(\R^d_+)$ and $I_{w}[J]$ be defined as in \eqref{eqn:Izj}.
	The following identity holds true:
	\[\begin{aligned}
	I_{w}[J]
	=&
	-	\pi \imath	\int	\frac{\md^{\sTDim-1}\vec{p}}{\sqrt{{w-\vec{p}\,}^2}}	\,	\tilde{J}^*\left(-\overline{\sqrt{w-{\vec{p}\,}^2}},\vec{p}\right)	\tilde{J}\left(\sqrt{w- {\vec{p}\,}^2},\vec{p}\right)
	\text{.}
	\end{aligned} \]
\end{lemma}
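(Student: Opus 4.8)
The plan is to compute $I_w[J] = \langle J, \Theta(\Delta - w)^{-1} J\rangle$ by passing to Fourier space and evaluating the $p^0$-integral by residues. First I would write out the inner product in momentum space using Plancherel. Since $\Theta$ acts by pullback of the time reflection $\theta\from(t,\vec{x})\mapsto(-t,\vec{x})$, on the Fourier side it sends $\tilde g(p^0,\vec p)$ to $\tilde g(-p^0,\vec p)$. The operator $(\Delta - w)^{-1}$ is the Fourier multiplier by $(p^2 - w)^{-1} = ((p^0)^2 + \vec p^{\,2} - w)^{-1}$. Hence, up to the usual $(2\pi)$-factors absorbed into the conventions of \cite{OstSch73},
\[
I_w[J]
=
\int \md^{\sTDim-1}\vec p \int \md p^0 \;
\frac{\overline{\tilde J(p^0,\vec p)}\,\tilde J(-p^0,\vec p)}{(p^0)^2 + \vec p^{\,2} - w}
\text{,}
\]
where I have used that the integrand involving $\overline{\tilde J}$ comes from the complex conjugate in the $L^2$-pairing. (One must be slightly careful with whether the pairing is sesquilinear or bilinear and track the complex conjugate accordingly; this only affects whether $\tilde J^*$ appears with $-p^0$ or $+p^0$, and the stated form fixes the convention.)

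The heart of the argument is the $p^0$-contour integral. Write $\vec p^{\,2} - w = -\zeta^2$ with $\zeta = \sqrt{w - \vec p^{\,2}}$, choosing the branch of the square root with $\Re\zeta > 0$ (this is well-defined since $w\notin\R_+$, so $w - \vec p^{\,2}$ avoids the nonnegative reals for all $\vec p$, hence $\zeta$ avoids the imaginary axis). Then the denominator factors as $(p^0 - \imath\zeta)(p^0 + \imath\zeta)$, with the pole $p^0 = \imath\zeta$ lying in the upper half-plane and $p^0 = -\imath\zeta$ in the lower half-plane. Because $J\in\Sc(\R^d)$, the one-variable Fourier transform $p^0\mapsto\tilde J(p^0,\vec p)$ extends — for the relevant support properties — to a function for which the contour can be closed; alternatively, and more robustly, I would use that $\tilde J$ is Schwartz so that the $p^0$-integral is absolutely convergent and can be computed by closing in whichever half-plane is convenient after splitting $\tilde J(p^0,\vec p)\tilde J(-p^0,\vec p)$ appropriately — picking up exactly one of the two simple poles. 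The residue at $p^0 = \imath\zeta$ of $1/((p^0-\imath\zeta)(p^0+\imath\zeta))$ is $1/(2\imath\zeta)$, and $2\pi\imath$ times this is $\pi/\zeta$; combined with the sign from orientation and the conjugate-variable bookkeeping one arrives at the factor $-\pi\imath/\sqrt{w-\vec p^{\,2}}$ together with the evaluation $\tilde J^*(-\overline{\zeta},\vec p)\,\tilde J(\zeta,\vec p)$ at the on-shell values $p^0 = \pm\zeta$ (the conjugate $\overline\zeta$ in the first argument arising because $\overline{\tilde J(p^0,\vec p)}$ evaluated at $p^0 = \imath\zeta$ is $\tilde J^*$ evaluated at the conjugate point).

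I expect the main obstacle to be justifying the contour-closing rigorously rather than the residue computation itself. The subtlety is that $\tilde J(p^0,\vec p)$, while Schwartz in the real variable $p^0$, need not extend to an entire function of $p^0$ decaying in a half-plane unless one invokes the support condition $\supp J\subseteq\{x^0\ge 0\}$ via a Paley--Wiener type statement; one has to argue that the product $\tilde J(p^0,\vec p)\tilde J(-p^0,\vec p)$ admits the required analytic continuation and half-plane decay so that only the residue at $p^0 = \imath\zeta$ contributes, with the semicircular arc at infinity vanishing. A secondary point requiring care is the interchange of the $\vec p$- and $p^0$-integrations (Fubini), and checking that $\zeta = \sqrt{w-\vec p^{\,2}}$ as a function of $\vec p$ does not cross any branch cut — which, as noted, is guaranteed precisely by the hypothesis $w\in\Cc\setminus\R_+$. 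Once these analytic-continuation and convergence issues are settled, the identity follows by collecting the residue contribution.
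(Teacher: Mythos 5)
Your route is the same as the paper's: write $I_w[J]$ in momentum space, do the $p^0$ integral by closing a contour and picking up the single simple pole lying in the half-plane of analyticity, the residue producing the factor $-\pi\imath/\sqrt{w-\vec p^{\,2}}$ and the on-shell evaluation $\tilde J^*\bigl(-\overline{\sqrt{w-\vec p^{\,2}}},\vec p\bigr)\tilde J\bigl(\sqrt{w-\vec p^{\,2}},\vec p\bigr)$, with the conjugated factor continued via $\tilde J^*(-\bar s,\vec p)$. Two points, however. First, a bookkeeping inconsistency: with your own definition $\zeta=\sqrt{w-\vec p^{\,2}}$ the denominator factors as $(p^0-\zeta)(p^0+\zeta)$, so the poles sit at $\pm\zeta$, not at $\pm\imath\zeta$, and the correct statement is that $w-\vec p^{\,2}\notin\R_+$ forces its two square roots off the \emph{real} axis, one in each open half-plane (the paper fixes $\sqrt{w-\vec p^{\,2}}$ in the lower half-plane); your conditions ``$\Re\zeta>0$, $\zeta$ avoids the imaginary axis'' belong to the other convention $\zeta=\sqrt{\vec p^{\,2}-w}$. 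This is repairable, but as written the branch discussion and the factorization do not match.

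The genuine gap is in the contour-closing step, which is essentially the entire content of the lemma and which you defer as ``the main obstacle.'' Your proposed ``more robust alternative'' --- that since $\tilde J$ is Schwartz the $p^0$-integral is absolutely convergent and ``can be computed by closing in whichever half-plane is convenient after splitting $\tilde J(p^0,\vec p)\tilde J(-p^0,\vec p)$ appropriately'' --- does not work: a Schwartz function of the real variable $p^0$ has in general no analytic continuation off the real axis at all, so absolute convergence buys nothing, and splitting the two factors so as to close them in different half-planes makes the residue theorem inapplicable to the product. The argument stands or falls with the Paley--Wiener theorem applied to the support condition $x^0\ge 0$: it gives that $s\mapsto\tilde J(s,\vec p)$ extends holomorphically, with sufficient fall-off, to one closed half-plane (the lower one in the paper's convention), and the conjugated factor is continued there as $\tilde J^*(-\bar s,\vec p)$, which is holomorphic in $s$ on the same half-plane; then the semicircular arc contribution vanishes, only the pole $\sqrt{w-\vec p^{\,2}}$ inside that contour contributes, and the clockwise orientation accounts for the overall sign in $-\pi\imath$. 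You correctly name these ingredients as the points ``requiring care,'' so the plan points at the right proof, but the step you flag as the obstacle is precisely the step the paper's proof consists of, and the shortcut you offer around it is not valid.
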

\begin{proof}
	The proof consists of computing a contour integral in $p^0$.
	Factorizing the measure, we obtain after Fourier transforming that
	\[
	\begin{aligned}
	I_{w}[J]
	=
	\int	\md^{\sTDim-1}\vec{p}	\int_{-\infty}^\infty	\md p^0	\,	\frac{\tilde{J}^*(-p^0,\vec{p})	\tilde{J}(p^0,\vec{p})}{(p^0)^2	+	{\vec{p}\,}^2	-	w}
	\text{.}\nonumber
	\end{aligned}
	\]
	The $p^0$ integral can be calculated by closing the contour in the lower half plane. This gives
	\[
	I_{z}[J]
	=
	\lim_{R\to\infty}	\int	\md^{\sTDim-1}\vec{p}	\int_{\Gamma_R}	\md s	\,	\frac{\tilde{J}^*(-\bar{s},\vec{p})	\tilde{J}(s,\vec{p})}{s^2	-(w-	{\vec{p}\,}^2)}
	\text{,}
	\]
	where $\Gamma_R$ is a large semicircle of radius $R$ such that $\Gamma_R\subset\{z\in \mathbb{C} \,|\, \Im(z)\leq 0 \}$. The integrand is meromorphic in the interior of the contour, since $\tilde{J}(s,\vec{p})$ is analytic in the lower half plane by the Paley--Wiener Theorem (cf. \cite[Theorem 7.2.4]{Str03}). Furthermore, since the desired integral over $p^0$ is over the reals, we can extend $\tilde{J}^*(p^0,\vec{p})$ to an analytic function by considering $\tilde{J}^*(\bar{s},\vec{p})$. Finally, the contribution of the semicircular arc of $\Gamma_R$ to the integral vanishes in the limit $R\to \infty$, since $\tilde{J}$ is falling off sufficiently fast by the same Paley--Wiener theorem.

	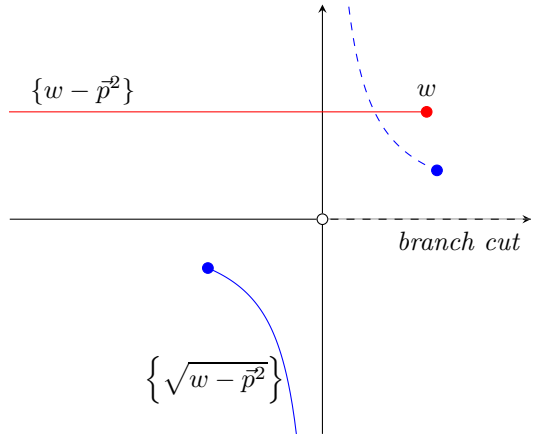
\begin{figure}[b]
		\begin{tikzpicture}
			\begin{axis}[xmin=-3,ymin=-2,xmax=2,ymax=2,axis x line=middle, axis y line=middle, tick align=outside,yticklabels={,,},xticklabels={,,},ytick style={draw=none},xtick style={draw=none}]
				\addplot [domain=0.1:.75*pi,samples=50,color=red]({-cot(\x r)},{1});		
				\filldraw[color=red] (axis cs:1,1) circle (2pt);
				\node at (axis cs:1,1.2) {$w$}; 
				\node at (axis cs:-2.3,1.2) {$\{w-\vec{p}^2\}$}; 
				\filldraw[color=blue] (axis cs:-1.099,-0.455) circle (2pt);
				\addplot [domain=0.1:.75*pi,samples=50,color=blue]({-sin(\x/2 r)/sqrt(sin(\x r))},{-cos(\x/2 r)/sqrt(sin(\x r))});		
				\filldraw[color=blue] (axis cs:1.099,0.455) circle (2pt);
				\addplot [dashed,domain=0.1:.75*pi,samples=50,color=blue]({sin(\x/2 r)/sqrt(sin(\x r))},{cos(\x/2 r)/sqrt(sin(\x r))});		
				\draw[color=white] (axis cs:2,0) -- (axis cs:0,0);
				\draw[dashed] (axis cs:2,0) -- (axis cs:0,0);
				\node at (axis cs:1.3,-.2) {{\it branch cut}};
				\filldraw[color=white] (axis cs:0,0) circle (2pt);
				\draw (axis cs:0,0) circle (2pt);
				\node[anchor=west] at (axis cs:-1.8,-1.5) {$\left\{ \sqrt{w-\vec{p}^2}\right\}$}; 
			\end{axis}
		\end{tikzpicture}
		\caption{The two square roots (in blue) of ${w-\vec{p}\,}^2$ as $\vec{p}$ varies over $\mathbb R^{d-1}$ (in red). We adopt the convention that $\sqrt{ {w-\vec{p}\,}^2}$ lies in the lower half plane (solid blue line). The branch cut will never be crossed because of our assumption that $w \notin \R_+$.}
		\label{fig:branch}
	\end{figure}
	The contour integral is now calculated by the residue theorem; the integrand has poles at $\pm \sqrt{{w-\vec{p}\,}^2}$, with the convention that $\sqrt{{w-\vec{p}\,}^2}$
	lies in the lower half plane (cf. Figure \ref{fig:branch}). The integral is then given by
	\[
	\begin{aligned}
	I_{w}[J]
	=&
	-	2\pi	\imath	\int	\md^{\sTDim-1}\vec{p}	\,	\res_{s\to \sqrt{{w-\vec{p}\,}^2}}	\frac{\tilde{J}^*(-\bar{s},\vec{p})	\tilde{J}(s,\vec{p})}{s^2	-	{w-\vec{p}\,}^2}
	\\=&
	-	\pi \imath	\int	\frac{\md^{\sTDim-1}\vec{p}	}{\sqrt{w-{\vec{p}\,}^2}}	\,	\tilde{J}^*\left(- \overline{\sqrt{{w- \vec{p}\,}^2}},\vec{p}\right)	\tilde{J}\left(\sqrt{w-{\vec{p}\,}^2},\vec{p}\right)
	\text{,}
	\end{aligned} \]
	as desired.
\end{proof}
The second lemma is a homogeneity property of $I_{w}[J]$, which follows from the previous lemma.

\begin{lemma}\label{lm:ihomogen}
	Let $J \in \Sc(\R^d_{+})$, and let $q$ be a polynomial. Then $q(\Delta)J \in \Sc(\R^d_{+})$, and
	\[
	I_{w}[q(\Delta)J]
	=
	\overline{q(\overline{w}) }q(w)		I_{w}[J]
	\text{,}
	\]
	for all $w \in \Cc \setminus \R_+$.
\end{lemma}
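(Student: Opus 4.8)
The plan is to exploit Lemma~\ref{lm:massshell}, which expresses $I_w[J]$ as an integral over $\vec p$ of a bilinear expression in the Fourier transform $\tilde J$ evaluated at $\bigl(\sqrt{w-\vec p^{\,2}},\vec p\bigr)$ together with the prefactor $-\pi\imath/\sqrt{w-\vec p^{\,2}}$. First I would observe that, since $q$ is a polynomial in the symbol of the Laplacian, $q(\Delta)$ acts on Schwartz functions by Fourier multiplication, so $\widetilde{q(\Delta)J}(p) = q(p^2)\,\tilde J(p)$ where $p=(p^0,\vec p)$ and $p^2=(p^0)^2+\vec p^{\,2}$ in Euclidean signature; in particular $q(\Delta)J$ is again Schwartz and still supported in $\R^d_+$ (locality of $q(\Delta)$ preserves the support condition), which gives the first assertion. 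The second assertion then follows by substituting $\tilde J \mapsto q(p^2)\tilde J$ into the formula of Lemma~\ref{lm:massshell} and evaluating the symbol $q(p^2)$ at the two on-shell points.

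The key computation is the evaluation of $p^2$ at these points. At $p^0 = \sqrt{w-\vec p^{\,2}}$ one has $(p^0)^2 = w-\vec p^{\,2}$, hence $p^2 = (p^0)^2+\vec p^{\,2} = w$, so the factor $q(p^2)$ coming from $\tilde J\bigl(\sqrt{w-\vec p^{\,2}},\vec p\bigr)$ becomes exactly $q(w)$. For the conjugated factor $\tilde J^*\bigl(-\overline{\sqrt{w-\vec p^{\,2}}},\vec p\bigr)$ I would recall that $\tilde J^*$ denotes the analytic continuation of the complex conjugate, i.e. $\tilde J^*(s,\vec p) = \overline{\tilde J(\bar s,\vec p)}$; replacing $J$ by $q(\Delta)J$ multiplies $\tilde J(\bar s,\vec p)$ by $q\bigl(\bar s^2+\vec p^{\,2}\bigr)$, and at $s = -\overline{\sqrt{w-\vec p^{\,2}}}$ we get $\bar s^2 = \sqrt{w-\vec p^{\,2}}^{\,2} = w-\vec p^{\,2}$ hmm — more carefully, $\bar s = -\sqrt{w-\vec p^{\,2}}$, so $\bar s^2 = w - \vec p^{\,2}$, hence $\bar s^2 + \vec p^{\,2} = w$, and after taking the outer complex conjugate this contributes the factor $\overline{q(w)}$. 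Since both factors $q(w)$ and $\overline{q(w)}$ are independent of the integration variable $\vec p$, they pull out of the integral, and what remains is precisely $I_w[J]$. This yields $I_w[q(\Delta)J] = \overline{q(w)}\,q(w)\,I_w[J]$; noting that for $w\notin\R$ we should more properly track whether $\overline{q(w)}$ or $\overline{q(\overline w)}$ appears — and indeed the conjugate of $q(w)$ with $q$ real-coefficient is $q(\overline w)$... I would reconcile this by checking the reality of the coefficients of $q$, which gives $\overline{q(w)} = q(\overline w)$, matching the stated identity $\overline{q(\overline w)}q(w)I_w[J]$ after one more conjugation check.

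The step I expect to be the main (if modest) obstacle is the careful bookkeeping of complex conjugates and branch choices in the $\tilde J^*$ factor: one must be precise about the convention $\tilde J^*(s,\vec p)=\overline{\tilde J(\bar s,\vec p)}$, about the branch of $\sqrt{w-\vec p^{\,2}}$ lying in the lower half plane (Figure~\ref{fig:branch}), and about where the square of that root reproduces $w-\vec p^{\,2}$ without a stray sign. Once the two on-shell evaluations are pinned down to $q(w)$ and $\overline{q(\overline w)}$ respectively, the homogeneity identity is immediate. A secondary, routine point is to note that the derivation of Lemma~\ref{lm:massshell} applies verbatim to $q(\Delta)J$ in place of $J$, since $q(p^2)\tilde J$ inherits the analyticity and fall-off in the lower half $p^0$-plane used in the contour argument there.
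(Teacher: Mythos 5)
Your overall route is the same as the paper's: locality of $q(\Delta)$ gives $q(\Delta)J\in\Sc(\R^d_+)$, and the identity is obtained by inserting the Fourier multiplier $q(p^2)$ into the formula of Lemma \ref{lm:massshell} and evaluating it at the on-shell points. However, the conjugation step — which you yourself identify as the crux — comes out wrong. In Lemma \ref{lm:massshell} the starred factor is the pointwise conjugate, $\tilde J^*(z,\vec p)=\overline{\tilde J(z,\vec p)}$; the analytic function used in the contour argument is $s\mapsto\overline{\tilde J(-\bar s,\vec p)}$, and the evaluation point $-\overline{\sqrt{w-\vec p^{\,2}}}$ lies in the \emph{lower} half plane, where $\tilde J$ is analytic by Paley--Wiener. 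Your convention $\tilde J^*(s,\vec p)=\overline{\tilde J(\bar s,\vec p)}$ would instead require $\tilde J$ at $-\sqrt{w-\vec p^{\,2}}$, a point in the upper half plane where $\tilde J$ need not be defined, so it is not the convention under which Lemma \ref{lm:massshell} holds. With the correct reading, replacing $J$ by $q(\Delta)J$ multiplies the starred factor by $\overline{q\bigl(\bigl(-\overline{\sqrt{w-\vec p^{\,2}}}\bigr)^2+\vec p^{\,2}\bigr)}=\overline{q(\overline w)}$, not $\overline{q(w)}$, so the multiplier is $\overline{q(\overline w)}\,q(w)$ (the coefficient-conjugated polynomial evaluated at $w$, times $q(w)$), and not $\overline{q(w)}\,q(w)=|q(w)|^2$ as you conclude.

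Your attempted reconciliation does not repair this: $q$ is not assumed to have real coefficients (indeed in Propositions \ref{prop:complexRPVio} and \ref{prop:RealHigOrd} the lemma is applied with complex-coefficient $q$ and complex $w=\lambda$), and even for real coefficients one has $\overline{q(w)}q(w)=q(\overline w)q(w)$, which differs from the stated $\overline{q(\overline w)}q(w)=q(w)^2$. A quick sanity check exposes the discrepancy: for $q(z)=z$, writing $(p^2)^2/(p^2-w)=(p^2-w)+2w+w^2/(p^2-w)$ and using that the polynomial part pairs to zero against $\Theta J$ by the support argument gives $I_w[\Delta J]=w^2 I_w[J]$, whereas your formula would give $|w|^2 I_w[J]$. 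The distinction is not cosmetic: the factor $\overline{q(\overline w)}q(w)$ is holomorphic in $w$, which is what allows the derivatives in $z$ of $I_z[q(\Delta)J]$ at (complex) poles to be taken in Lemma \ref{lem:multiplePoles} and the two propositions; the non-holomorphic $|q(w)|^2$ would break those arguments. So the plan is right, but the key computation must be redone with the correct convention for $\tilde J^*$.
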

\begin{proof}
	Since $q$ is a polynomial, the operator $q(\Delta)$ is local. Thus, the support of $q(\Delta)J$ is contained in the support of $J$, which is contained in $\R^d_{+}$. Furthermore, since $J$ is smooth, $q(\Delta)J$ is smooth as well.

	For the next assertion, we use Lemma \ref{lm:massshell} to calculate $I_w[J]$:
	\[
	\begin{aligned} I_{w}[q(\Delta)J] =   -	\pi \imath	\int	\frac{\md^{\sTDim-1}\vec{p}}{\sqrt{w-{\vec{p}\,}^2}}	\,	\widetilde{q(\Delta)J}^*\left(	- \overline{\sqrt{w-{\vec{p}\,}^2}},\vec{p}\right)	\widetilde{q(\Delta)J}\left(	\sqrt{{w-\vec{p}\,}^2},\vec{p}\right)
	\text{.}
	\end{aligned} \]
	Since $\widetilde{(q(\Delta)J)}(p)= q(p^2) \tilde{J}(p)$ and $\widetilde{(q(\Delta)J)}^*(p)= \overline{q(p^2)} \tilde{J}^*(p)$, the claim follows.
	This completes the proof.
\end{proof}

\begin{lemma}\label{lm:nontriviality}
	Let $I_{w}[J]$ be defined as in eq. \eqref{eqn:Izj}.
	Then there exists a $J\in \Sc(\R^d_{+})$ for which $$I_{w}[J]\neq0\,.$$
\end{lemma}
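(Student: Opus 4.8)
The plan is to prove the slightly stronger statement that there is even a \emph{product} $J = h\otimes g\in\Sc(\R^d_+)$, with $h\in\Sc(\R^{d-1})$ real-valued and even and $g\in\Sc(\R_+)$, for which $I_w[J]\neq 0$; this reduces everything to a one-dimensional, time-direction question through the momentum-space formula of Lemma~\ref{lm:massshell}. For such a $J$ we have $\tilde J(p^0,\vec p) = \tilde g(p^0)\,\tilde h(\vec p)$, and the $*$-operation respects this tensor factorisation; since $\tilde h$ is real-valued and even and the two Fourier factors appearing in the formula of Lemma~\ref{lm:massshell} carry the \emph{same} spatial momentum $\vec p$, the spatial part of the integrand is simply $\tilde h(\vec p)^{2}\geq 0$, and we are left with
\[
	I_w[h\otimes g]
	=
	-\pi\imath\int \Phi_g(\vec p)\,\tilde h(\vec p)^{2}\;\md^{\sTDim-1}\vec p
	\text{,}
\]
where the ``time-direction'' factor $\Phi_g(\vec p)$ depends on $g$ only through the values of $\tilde g$ --- analytically continued as in the proof of Lemma~\ref{lm:massshell} --- at the two conjugate mass-shell points $\pm\sqrt{w-\vec p^{2}}$, with the benign weight $1/\sqrt{w-\vec p^{2}}$.

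Next I would localise in $\vec p$ near $\vec p = 0$. Because $w\notin\R_+$, the point $w-\vec p^{2}$ never lands on the branch cut $\R_+$ for any real $\vec p$ (cf.\ Figure~\ref{fig:branch}), so $\vec p\mapsto\sqrt{w-\vec p^{2}}$ is continuous in a neighbourhood of $0$ with values in the (open) half-plane of analyticity of $\tilde g$; hence $\Phi_g$ is continuous near $\vec p=0$. Granting the existence, proved in the last step, of a $g\in\Sc(\R_+)$ with $\Phi_g(0)\neq 0$, I would then choose $\tilde h$ to be a non-negative even $C^\infty$-bump supported in a ball $B$ about $0$ so small that $|\Phi_g(\vec p)-\Phi_g(0)| < \tfrac12|\Phi_g(0)|$ on $B$; the corresponding $h$ is Schwartz, real and even, and $J = h\otimes g\in\Sc(\R^d_+)$ because only $g$ --- not $h$ --- carries a support constraint. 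A triangle-inequality estimate then yields $\bigl|\int\Phi_g(\vec p)\,\tilde h(\vec p)^{2}\,\md^{\sTDim-1}\vec p\bigr|\geq\tfrac12|\Phi_g(0)|\int\tilde h(\vec p)^{2}\,\md^{\sTDim-1}\vec p>0$, and therefore $I_w[J]\neq 0$.

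It remains to exhibit a $g\in\Sc(\R_+)$ with $\Phi_g(0)\neq 0$. By the factorisation above, $\Phi_g(0)$ is a non-zero constant times the value of $\tilde g$ at $\sqrt w$ and the value of the $*$-continuation of $\tilde g$ at the conjugate point, both of which lie off the real axis since $w\notin\R_+$. Each of these two evaluations is a continuous linear functional (or its complex conjugate) on the Fr\'echet space $\Sc(\R_+)$, obtained by pairing $g$ with a nowhere-vanishing exponential on $[0,\infty)$; such a functional cannot be identically zero, for otherwise that exponential would vanish as a distribution on $(0,\infty)$. Hence the kernel of each is a proper closed subspace of $\Sc(\R_+)$, and since a Fr\'echet space is never the union of two proper closed subspaces there is a $g$ outside both kernels. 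For this $g$ we have $\Phi_g(0)\neq 0$, which completes the construction.

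The step I expect to be the main obstacle is the first one: extracting from Lemma~\ref{lm:massshell} --- with careful bookkeeping of the branch of $\sqrt{w-\vec p^{2}}$ and of the complex conjugation implicit in the $*$-operation --- that, for a product $J$, the integrand genuinely factorises into the non-negative spatial part $\tilde h(\vec p)^{2}$ times a time-direction factor that, at $\vec p=0$, is a non-zero constant times \emph{two independent, non-zero} evaluation functionals of $g$, each living in the correct half-plane of analyticity; once this is in hand, the localisation in $\vec p$ and the Baire-type non-vanishing argument are routine. (As a cross-check, one can avoid Lemma~\ref{lm:massshell} altogether: if $I_w[J]=\langle J,\Theta(\Delta-w)^{-1}J\rangle$ vanished for every $J\in\Sc(\R^d_+)$, complex polarisation would give $\langle\Theta J_1,(\Delta-w)^{-1}J_2\rangle=0$ for all $J_1,J_2\in\Sc(\R^d_+)$, and density of $\Theta\Sc(\R^d_+)$ in the $L^2$-functions supported in $\{x^0\leq 0\}$ would then force $(\Delta-w)^{-1}J_2$ to be supported in $\{x^0\geq 0\}$ for every $J_2$ --- impossible, since $(\Delta-w)^{-1}$ is non-local, its Green's function being a globally supported Yukawa-type kernel.)
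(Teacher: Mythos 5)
Your argument is correct, but it proves the lemma by a genuinely different route than the paper. The paper also starts from Lemma \ref{lm:massshell} and also localises the spatial momentum at $\vec p=0$, but it does so with a completely explicit candidate: it first extends $I_w$ to a continuous quadratic form on $L^2(\R^d_+)$ (needed because the candidate is an indicator function $\chi_{[a,b]}$ in time, not a Schwartz function, tensored with spatial Gaussians), computes $\widetilde J$ in closed form, lets the Gaussian widths $c_1,\dots,c_{d-1}\to 0$ so that the spatial factor concentrates to $\delta(\vec p)$, and reads off the limiting value $-\tfrac{1}{2^{(d+3)/2}\pi\imath}w^{-3/2}\bigl(e^{-2\pi\imath a\sqrt w}-e^{-2\pi\imath b\sqrt w}\bigr)^2$, which is non-zero for generic $a,b$. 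You instead stay entirely inside $\Sc(\R^d_+)$ with a product ansatz $h\otimes g$: the key observation that both Fourier factors in Lemma \ref{lm:massshell} carry the same $\vec p$ indeed makes the spatial part $\overline{\tilde h(\vec p)}\,\tilde h(\vec p)=|\tilde h(\vec p)|^2\ge 0$ (so the realness/evenness of $h$ is convenient but not even necessary), your continuity-plus-bump localisation replaces the paper's delta-limit, and the non-vanishing of the time factor at $\vec p=0$ is obtained abstractly from the fact that the two evaluation functionals $g\mapsto\tilde g(\sqrt w)$ and the starred evaluation at $-\overline{\sqrt w}$ (both points in the open lower half-plane, where Paley--Wiener gives the continuation as $\int_0^\infty g(t)e^{-2\pi\imath z t}\,\md t$) have proper kernels, and a vector space is never the union of two proper subspaces. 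What your route buys is the avoidance of the $L^2$-extension/density step and of the slightly delicate $c_i\to 0$ limit and ``generic $a,b$'' claim; what it loses is the explicit formula, which the paper can simply exhibit. One small caveat: your parenthetical cross-check via polarisation and non-locality of $(\Delta-w)^{-1}$ is only a sketch as stated --- ``the Green's function is globally supported'' needs to be upgraded to, e.g., non-vanishing of the resolvent kernel off the diagonal (or an analytic unique-continuation argument) to actually produce the contradiction --- but since it is offered only as a cross-check, this does not affect the validity of your main proof.
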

\begin{proof}
First note that $I_w[J]$ extends to a continuous quadratic form on $L^2(\R_+^d)$, so that by a density argument it suffices to show that there exists a $J \in L^2(\R_+^d)$ for which $I_w[J] \neq 0$. We consider the following explicit candidate 
\[
			J(\tau,\vec{x} ) 
	= 
			\chi_{[a,b]}(\tau)	\cdot	(c_1 \cdots c_{d-1})^{1/4} e^{- \pi( c_1x_1^ 2+ \cdots c_{d-1}x_{d-1}^2)}   
\]
where $\chi_{[a,b]}$ is the indicator function on the interval $[a,b] \subset \R_+$ and $c_1,\ldots,c_{d-1}>0$. One readily checks that for $z$ in the lower half plane we have 
\[
			\widetilde J(z,\vec{p}) 
	=
			\frac{1}{2\pi \imath z}	\left( e^{-2\pi \imath a z} - e^{-2\pi \imath b z} \right)	\frac1{(c_1\cdots c_{d-1})^{1/4}}e^{-\pi (p_1^2/c_1 + \cdots p_{d-1}^2/c_{d-1} )}
	\text{.}
\]
We may now invoke Lemma \ref{lm:massshell} to write for this $J$ that we have 
\begin{align*}
I_w[J]&= -\frac{1}{4 \pi \imath} \int \frac{\md^{\sTDim-1}\vec{p} }{\left({w-\vec{p}\,}^{2}\right)^{3/2}} 
\left( e^{-2\pi \imath a \sqrt{{w-\vec{p}\,}^{2}}}- e^{-2\pi \imath b \sqrt{{w-\vec{p}\,}^{2}}} \right)^2 
\\
&\qquad\qquad\qquad\qquad \times \frac{1}{(c_1\cdots c_{d-1})^{1/2}}  e^{-2 \pi( p_1^2 /c_1+ \cdots p_{d-1}^2 /c_{d-1})}.
\end{align*}
Now observe that as we let $c_1,\ldots c_{d-1} \to 0$ the Gaussian integrals converge to the Dirac delta distribution $\delta(\vec{p})$ (note, however, that this only applies to the combined expression for $\widetilde J^*(-\bar z,p) \widetilde J(z,p)$ appearing in the above integral expression for $I_w(J)$; it is not the case that the function $\widetilde J(z,p)$ itself converges to a Dirac delta distribution). Moreover, it is sufficient to consider this limiting case, since if the above integral is non-zero in this limit, there must exist finite values of $c_1,\ldots,c_{d-1}$ for which $I_w[J]$ is also non-zero. It is now straightforward to compute that in the limit we have
$$
I_w[J]\bigg|_{c_1,\ldots,c_{d-1}\to 0} =  -\frac{1}{2^{(d+3)/2}\pi \imath } w^{-3/2} 
\left( e^{- 2\pi \imath a\sqrt{w}}- e^{-2\pi \imath b \sqrt{w} } \right)^2
$$
which is indeed non-zero for generic values of $a$ and $b$.
\end{proof}
  
\subsection{Reduction to separate poles}
Before proceeding with considering the various cases, we will show that,
whenever we have a term in the sum \eqref{eqn:polesum} which on its own violates reflection positivity, we can always retune $J$ in such a way that it only depends on this term. This follows from Lemma \ref{lm:nontriviality}, combined with the homogeneity property of Lemma \ref{lm:ihomogen}.

\begin{lemma}
\label{lem:multiplePoles}
For any pole $\lambda \in \Pc$, there exists a $J\in \Sc(\R^d_{+})$ such that
\[I[J] = \langle J, C_{\lambda}(\Delta) \Theta J \rangle.\]
\end{lemma}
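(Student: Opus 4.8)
The plan is to use the residue representation \eqref{eqn:integralPoleRep2} together with the homogeneity property of Lemma \ref{lm:ihomogen} to ``project out'' all contributions to $I[J]$ except the one coming from the pole $\lambda$. Concretely, fix $\lambda \in \Pc$, and denote by $\lambda_1, \dots, \lambda_M$ the \emph{other} poles of $C$ (including $\overline{\lambda}$ if $\lambda \notin \R$, which must be treated together with $\lambda$, so more precisely one should group $\{\lambda, \overline{\lambda}\}$ and take $\lambda_1, \dots, \lambda_M$ to be the remaining poles). The idea is to start from a $J_0 \in \Sc(\R^d_+)$ with $I_\lambda[J_0] \neq 0$, which exists by Lemma \ref{lm:nontriviality}, and then replace $J_0$ by $J \defMath q(\Delta) J_0$ for a suitable real polynomial $q$ chosen so that $q$ vanishes to sufficiently high order at each $\lambda_i$ (to high enough order to kill not just $I_{\lambda_i}$ but all its $z_i$-derivatives appearing in \eqref{eqn:integralPoleRep2}, i.e.\ to order $k_i$), while $q(\lambda) \neq 0$ so that the $\lambda$-term survives.

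First I would make the dependence on each pole in \eqref{eqn:integralPoleRep2} explicit: the contribution of pole $z_j$ is $\sum_{n=1}^{k_j} a_{jn} \frac{1}{(n-1)!} \partial_{z_j}^{n-1} I_{z_j}[J]$, so what needs to be annihilated for each $j$ with $z_j \notin \{\lambda,\overline\lambda\}$ is the function $w \mapsto I_w[J]$ together with its derivatives up to order $k_j - 1$ at $w = z_j$. By Lemma \ref{lm:ihomogen}, $I_w[q(\Delta)J_0] = \overline{q(\overline w)} q(w) I_w[J_0]$; since $C$ is real, the non-real poles come in conjugate pairs and $q$ being a real polynomial means $\overline{q(\overline w)} = q(w)$ when... more carefully, $\overline{q(\overline w)}$ is the polynomial with conjugated coefficients evaluated at $w$, which equals $q(w)$ precisely because $q$ has real coefficients. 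So $I_w[q(\Delta)J_0] = q(w)^2 I_w[J_0]$ as functions of the complex variable $w$ (off $\R_+$). Therefore, choosing $q(w) = \prod_{i=1}^{M} (w - \lambda_i)^{k_i}$ — real after grouping conjugate factors — guarantees that $q(w)^2$ vanishes to order $2k_i \geq k_i$ at each $\lambda_i$, so $w \mapsto q(w)^2 I_w[J_0]$ and all its derivatives up to order $k_i - 1$ vanish at $\lambda_i$ by the product/Leibniz rule. Hence every term in \eqref{eqn:integralPoleRep2} coming from a pole other than $\lambda$ (and $\overline\lambda$) drops out, and $I[J] = \langle J, C_\lambda(\Delta)\Theta J\rangle$.

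The one genuine point to check is that the $\lambda$-contribution does not also vanish, i.e.\ that $J = q(\Delta)J_0$ is not a trivially bad choice. This is where I would need $q(\lambda) \neq 0$, which holds by construction since $\lambda \notin \{\lambda_1,\dots,\lambda_M\}$, together with the freedom in Lemma \ref{lm:nontriviality}: one should pick $J_0$ so that $I_\lambda[J_0] \neq 0$, and then $I_\lambda[J] = q(\lambda)^2 I_\lambda[J_0] \neq 0$, so $J$ is genuinely nontrivial and $\langle J, C_\lambda(\Delta)\Theta J\rangle$ carries real content. Strictly speaking the statement as written only asserts the \emph{identity} $I[J] = \langle J, C_\lambda(\Delta)\Theta J\rangle$ for \emph{some} $J$, which the projection argument already delivers; the nonvanishing is what makes the lemma useful downstream (when $C_\lambda$ is shown to violate positivity one wants a $J$ realizing that violation), so I would record it as part of the proof even if not strictly demanded by the statement.

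The main obstacle — really the only subtlety — is bookkeeping around non-real poles: one must make sure $q$ is a \emph{real} polynomial so that $q(\Delta)J_0$ stays real-valued and lands in $\Sc(\R^d_+)$ (as guaranteed by Lemma \ref{lm:ihomogen}), which forces the conjugate pole $\overline\lambda$ to be handled in the same group as $\lambda$ rather than among the $\lambda_i$, and likewise forces the factors $(w-\lambda_i)$ for non-real $\lambda_i$ to be paired with $(w-\overline{\lambda_i})$ into real quadratic factors. Once the grouping is set up correctly, the computation is just an application of Lemma \ref{lm:ihomogen} to \eqref{eqn:integralPoleRep2} followed by the observation that a factor vanishing to order $2k_i$ kills all derivatives of order $< k_i$. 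I do not anticipate any analytic difficulty beyond this.
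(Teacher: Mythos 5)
Your proposal is correct and follows essentially the same route as the paper: starting from a $J_0$ with $I_\lambda[J_0]\neq 0$ (Lemma \ref{lm:nontriviality}), multiplying by the polynomial $q(z)=\prod_{z_j\in\Pc\setminus\{\lambda,\overline\lambda\}}(z-z_j)^{k_j}$, and using the homogeneity of Lemma \ref{lm:ihomogen} in the residue expansion \eqref{eqn:integralPoleRep2} so that the factor vanishing to order at least $k_j$ at each other pole kills all derivatives of order $\le k_j-1$ there, leaving only the $\lambda$ (and $\overline\lambda$) contribution. Your extra remarks on realness of $q$ (via conjugate pairing) and on $q(\lambda)\neq 0$ preserving $I_\lambda[J]\neq 0$ match what the paper does implicitly, so no gap remains.
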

\begin{proof}

Let again $J \in  \Sc(\R^d_{+})$ be such that $I_{\lambda}[J] \neq 0$, by Lemma \ref{lm:nontriviality}.
We consider the polynomial
\[
			q(z)
	=
			\prod_{z_j \in \Pc \setminus \lbrace \lambda, \overline{\lambda} \rbrace } (z-z_j)^{k_j}
	\text{.}
\]
By definition, \begin{equation}
\label{eq:multisum}
I[q(\Delta)J] \coloneqq \sum_{j}
			\sum_{n=1}^{k_j} a_{jn}	\frac{1}{(n-1)!}	\left.	\diff{z^{n-1}}{^{n-1}}	I_{z}[q(\Delta)J]	\right|_{z=z_j},
\end{equation}
with the first sum running over the set of poles $\Pc$.
If we set $q_j(z) = q(z)/(z-z_j)^{k_j}$, it is easy to see that the sums
\[ \begin{aligned}& \sum_{n=1}^{k_j} a_{jn}	\frac{1}{(n-1)!}		\diff{z^{n-1}}{^{n-1}}	I_{z}[q(\Delta)J]
\\=& \quad
			\sum_{n=1}^{k_j} a_{jn}	\frac{1}{(n-1)!}			\diff{z^{n-1}}{^{n-1}}(z-z_j)^{k_j}	(z-\overline{z_j})^{k_j}	I_s[q_j(\Delta)J]
	\end{aligned}
\]
vanish upon evaluation at  $z_j \neq \lambda, \overline{\lambda}$.

Therefore, the only terms that are left in the sum \eqref{eq:multisum} are the ones corresponding to $\lambda, \overline{\lambda}$, giving
\[\begin{aligned}
			I[q(\Delta)J]
	=& \left\langle q(\Delta) J, C_{\lambda}(\Delta) \Theta q(\Delta) J \right\rangle \text{,}
	\end{aligned}
\]
which proves the claim.
\end{proof}

\subsection{Violation of reflection positivity}
We are now ready to prove that the condition in Propostion \ref{prop:if-part} is also necessary. We do this by constructing functions in $\Sc(\R_+^d)$ that violate reflection positivity, in all the cases left out by the sufficient condition in Proposition \ref{prop:if-part}. 
The case of complex poles is covered by Propostion \ref{prop:complexRPVio} while poles of higher order are excluded in Proposition \ref{prop:RealHigOrd}.

\begin{prop}\label{prop:complexRPVio}
Let us assume that $C$ has a pole in the upper half-plane.
Then, there exists $J\in \Sc(\R^d_{+})$ such that
\begin{align}
I[J] <0 \text{.}
\end{align}
\end{prop}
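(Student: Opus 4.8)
By Lemma \ref{lem:multiplePoles}, it suffices to show that the single two-pole operator $C_\lambda(\Delta)$ associated to a complex pole $\lambda$ in the upper half-plane (together with its conjugate $\overline\lambda$) violates reflection positivity; so I may assume from the outset that $C(\Delta) = C_\lambda(\Delta) = \sum_{n=1}^k \bigl( a_n(\Delta-\lambda)^{-n} + \overline{a}_n(\Delta-\overline\lambda)^{-n}\bigr)$. The first reduction is to the simple-pole case: using the proposition around \eqref{eqn:integralPoleRep2}, $I[J]$ is a linear combination of $I_\lambda[J]$, $I_{\overline\lambda}[J]$ and their derivatives in the pole variable. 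I would first handle $k=1$, where $I[J] = a_1 I_\lambda[J] + \overline{a}_1 I_{\overline\lambda}[J] = 2\Re\bigl(a_1 I_\lambda[J]\bigr)$ (using $I_{\overline\lambda}[J] = \overline{I_\lambda[J]}$, which follows from Lemma \ref{lm:massshell} and the reality of $J$); and then explain how higher-order poles reduce to this by the homogeneity trick of Lemma \ref{lm:ihomogen}.

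\textbf{The simple complex pole.} With $J$ the explicit Gaussian-times-indicator test function of Lemma \ref{lm:nontriviality} and its $c_i \to 0$ limit, I would use the computed value
\[
I_\lambda[J]\Big|_{c\to 0} = -\frac{1}{2^{(d+3)/2}\pi\imath}\,\lambda^{-3/2}\bigl(e^{-2\pi\imath a\sqrt\lambda} - e^{-2\pi\imath b\sqrt\lambda}\bigr)^2,
\]
so that $I[J]\big|_{c\to 0} = 2\Re\bigl(a_1 I_\lambda[J]\big|_{c\to 0}\bigr)$ is an explicit function of the two free parameters $a,b \in \R_+$ (with $0 \le a < b$). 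The key observation is that $\sqrt\lambda$ has nonzero imaginary part (since $\lambda \notin \R_+$ and we pick the branch used throughout, e.g. $\Im\sqrt\lambda < 0$ corresponding to the convention in Figure \ref{fig:branch}); write $\sqrt\lambda = \alpha - \imath\beta$ with $\beta > 0$. Then $e^{-2\pi\imath b\sqrt\lambda} = e^{-2\pi\imath b\alpha}e^{-2\pi b\beta}$ has modulus decaying exponentially in $b$, while $e^{-2\pi\imath a\sqrt\lambda}$ stays bounded; choosing $a=0$ and $b$ large, $\bigl(e^{-2\pi\imath a\sqrt\lambda}-e^{-2\pi\imath b\sqrt\lambda}\bigr)^2 \to 1$, so $I[J]\big|_{c\to 0} \to 2\Re\bigl(-a_1 \lambda^{-3/2}/(2^{(d+3)/2}\pi\imath)\bigr)$, a fixed nonzero real number in general. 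The point is that by instead letting $b$ range over a full period $2\pi b\alpha \in [0,2\pi)$ (with $b$ moderately large so the $e^{-2\pi b\beta}$ term is small but nonzero), the phase of $\bigl(e^{-2\pi\imath a\sqrt\lambda}-e^{-2\pi\imath b\sqrt\lambda}\bigr)^2$ — equivalently of $I_\lambda[J]$ — sweeps through an interval of length $> \pi$ in the argument; hence for a suitable choice of $a,b$ the quantity $a_1 I_\lambda[J]$ has negative real part, i.e. $I[J]\big|_{c\to 0} < 0$. Since $I[J]$ is the $c \to 0$ limit of $I[J]$ for finite $c$, there exist finite $c_1,\ldots,c_{d-1} > 0$ with $I[J] < 0$, and this $J$ lies in $\Sc(\R^d_+)$, completing the simple-pole case.

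\textbf{Higher-order complex poles.} For $\ord{C}{\lambda} = k > 1$, I would not differentiate $I_\lambda[J]$ directly but instead use Lemma \ref{lm:ihomogen}: replacing $J$ by $r(\Delta)J$ for a polynomial $r$ multiplies $I_w[r(\Delta)J]$ by $\overline{r(\overline w)}r(w)$, and one can choose $r$ vanishing to high order at all poles except $\lambda,\overline\lambda$ — in the spirit of Lemma \ref{lem:multiplePoles} but now sharpened to kill all but the leading $n=k$ term at $\lambda$. Concretely, taking $r(z) = (z-\lambda)^{k-1}(z-\overline\lambda)^{k-1}$ and inserting $r(\Delta)J$, every term $\diff{z^{n-1}}{^{n-1}} I_z[r(\Delta)J]\big|_{z=\lambda}$ with $n < k$ acquires a factor $(z-\lambda)^{k-1}$ that is still differentiated fewer than $k-1$ times and hence vanishes at $z=\lambda$, leaving only the $n=k$ contribution, which reduces (up to a positive combinatorial constant and the factor $\overline{r(\overline\lambda)}\,r(\lambda)$-type derivatives evaluated by Leibniz) to a nonzero multiple of $I_\lambda[J]$ plus lower terms that vanish; so $I[r(\Delta)J] = \const \cdot a_k\, I_\lambda[J] + \text{c.c.}$ with $\const \ne 0$, and we are back to the simple-pole analysis with $a_k$ in place of $a_1$. \textbf{The main obstacle} I anticipate is precisely this bookkeeping in the higher-order case — verifying that the Leibniz expansion of $\diff{z^{k-1}}{^{k-1}}\bigl[(z-\lambda)^{k-1}(z-\overline\lambda)^{k-1} I_z[J]\bigr]$ at $z=\lambda$ isolates a clean nonzero multiple of $I_\lambda[J]$ and that the residue coefficient $a_k$ of the top-order term is genuinely nonzero (true by definition of $\ord{C}{\lambda}=k$). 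The simple-pole phase argument is the conceptual heart; everything else is reduction machinery already available from the preceding lemmata.
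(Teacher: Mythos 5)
Your reduction to $C_\lambda(\Delta)$ and the identity $I[J]=2\Re\bigl(a_1 I_\lambda[J]\bigr)$ in the simple-pole case are fine, but the higher-order step contains a genuine error. Your polynomial $r(z)=(z-\lambda)^{k-1}(z-\overline{\lambda})^{k-1}$ has \emph{real} coefficients, so Lemma \ref{lm:ihomogen} produces the multiplier $\overline{r(\overline{z})}\,r(z)=r(z)^2=(z-\lambda)^{2(k-1)}(z-\overline{\lambda})^{2(k-1)}$, which vanishes at $z=\lambda$ to order $2(k-1)\geq k$ for $k\geq 2$. Hence every derivative of order $n-1\leq k-1$ of $I_z[r(\Delta)J]$ vanishes at $z=\lambda$ --- including the $n=k$ term you intend to isolate --- and $I[r(\Delta)J]=0$ identically, not a nonzero multiple of $I_\lambda[J]$. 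The point you are missing is that Lemma \ref{lm:ihomogen} already supplies the conjugate factor $(z-\overline{\lambda})^{k-1}$ for free; this is exactly why the paper takes $q(z)=(z-\lambda)^{k-1}h(z)$ with $h$ affine, normalized by $h(\lambda)=1$ and $\overline{h(\overline{\lambda})}=\alpha$ a \emph{free complex parameter}, so the total multiplier vanishes at $\lambda$ to order exactly $k-1$ and the surviving term is $2\Re\bigl(\alpha\,a_k(2\imath\,\Im(\lambda))^{k-1}I_\lambda[J]\bigr)$, which is made negative by choosing $\alpha$, needing nothing beyond $I_\lambda[J]\neq 0$. (Taking $r(z)=(z-\lambda)^{k-1}$ alone would repair your reduction, but then you are back to needing phase control on $I_\lambda[J]$ itself.)

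That phase control is the second problem. In the simple-pole case you fix $a=0$, take $b$ ``moderately large'', and claim that varying $b$ over one period makes the argument of $\bigl(1-e^{-2\pi\imath b\sqrt{\lambda}}\bigr)^2$ sweep an interval of length greater than $\pi$. It does not: when $e^{-2\pi b\beta}$ is small the bracket stays within $O(e^{-2\pi b\beta})$ of $1$, so the argument of its square stays near $0$; if the constant $2\Re\bigl(-a_1\lambda^{-3/2}/(2^{(d+3)/2}\pi\imath)\bigr)$ happens to be nonnegative, your construction gives nothing. A correct version of your idea is to translate the interval: replacing $[a,b]$ by $[a+t,b+t]$ multiplies $\bigl(e^{-2\pi\imath a\sqrt{\lambda}}-e^{-2\pi\imath b\sqrt{\lambda}}\bigr)^2$ by $e^{-4\pi\imath t\sqrt{\lambda}}$, whose argument $-4\pi t\,\Re\sqrt{\lambda}$ runs over all of $[0,2\pi)$ because $\Re\sqrt{\lambda}\neq 0$ for $\lambda$ off the real axis, so $\Re\bigl(a_1 I_\lambda[J]\bigr)$ can indeed be made negative. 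You would also need to record that the explicit $J$ of Lemma \ref{lm:nontriviality} is only in $L^2(\R^d_+)$ (indicator times Gaussian), and pass back to $\Sc(\R^d_{+})$ by the continuity/density argument before applying any $r(\Delta)$. The paper's route avoids all of this bookkeeping by placing the free parameter in the polynomial $q$ rather than in $J$, so that Lemma \ref{lm:nontriviality} is used only through the single fact $I_\lambda[J]\neq 0$, uniformly for simple and higher-order poles.
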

\begin{proof}
In view of Lemma \ref{lem:multiplePoles}, we may assume, without loss of generality, that $C$ has exactly two complex conjugate poles, $\lambda$ and $\overline{\lambda}$ of order $k$, with $\Im( \lambda) >0$.

Let $J \in \Sc(\R^d_+)$ be such that $I_{\lambda}[J] \neq 0$, which is always possible in view of Lemma \ref{lm:nontriviality}.
Let us choose a polynomial ansatz for $q$ defined as follows
\[
 q(z)
 =
 (z-\lambda)^{k-1}	h(z)
 \text{,}
 \quad
 h(s)
 =
 \frac{1
  -	\bar{\alpha}}{2\imath	\Im(\lambda)}	(z-\lambda)
 +	1
 \text{.}
\]
Note that the polynomial $h$ is chosen in such a way that $h(\lambda) = 1$, and $\overline{h(\overline{\lambda})} = \alpha$.

We now compute
\[
\begin{aligned} I[q(\Delta)J]
 =&
 2\Re	\sum_{n=1}^{k}	a_n	\frac{1}{(n-1)!}	\left.
 \diff{z^{n-1}}{^{n-1}}	I_{z}[q(\Delta)J]
 \right|_{z=\lambda}
 \\=&
 2\Re	\sum_{n=1}^{k}	a_n	\frac{1}{(n-1)!}	\left.
 \diff{s^{n-1}}{^{n-1}}	(z-\lambda)^{k-1}	(z-\bar{\lambda})^{k-1}	h(z)	\overline{h(\bar{z})}	I_{z}[J]
 \right|_{z=\lambda}
 \\=&
 2\Re	\left(
 \alpha \cdot a_{k}	\left(2\imath \Im(\lambda)\right)^{k-1}	I_{\lambda}[J]
 \right)
 \text{,} \nonumber
\end{aligned}\]
In the first step, we have applied Lemma \ref{lm:ihomogen}. In the last step, we have used that $(z-\lambda)^{k-1}$ vanishes at $\lambda$, with all of its derivatives except for the $k-1$-th. Note that $\Im(\lambda) \neq 0$, and $I_{\lambda}[J] \neq 0$ by construction, hence we can choose $\alpha$ to make the above 
expression negative, to prove the statement.
\end{proof}

The case of real poles of order greater than one can be treated similarly.
\begin{prop}\label{prop:RealHigOrd}
Suppose that $C(z)$ has one real pole of order greater than one.
Then, there exists $J\in \Sc(\R^d_+)$ such that
\begin{align}
I[J] < 0.
\end{align}
\end{prop}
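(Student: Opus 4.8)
The plan is to mimic the strategy of Proposition~\ref{prop:complexRPVio}, exploiting Lemma~\ref{lem:multiplePoles} to reduce to a single real pole $\mu \in \R_-$ of order $k \geq 2$, so that $C_\mu(\Delta) = \sum_{n=1}^k a_n (\Delta - \mu)^{-n}$ with $a_k \neq 0$. By Lemma~\ref{lm:nontriviality} we may fix $J_0 \in \Sc(\R^d_+)$ with $I_\mu[J_0] \neq 0$; in fact, since $I_\mu[J_0]$ is (up to a positive constant and the positivity established in \cite{JaRi08}) a nonnegative real number, we may assume $I_\mu[J_0] > 0$. The idea is then to apply a polynomial $q(\Delta)$ to $J_0$ that isolates a single term of the residue sum and attaches a sign-controllable prefactor.

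Concretely, I would take $q(z) = (z-\mu)^{k-1}$, so that $q(\Delta)J_0 \in \Sc(\R^d_+)$ by Lemma~\ref{lm:ihomogen}. Then, by the proposition preceding Proposition~\ref{prop:if-part} together with Lemma~\ref{lm:ihomogen}, one has
\[
I[q(\Delta)J_0]
=
\sum_{n=1}^k a_n \frac{1}{(n-1)!} \left. \diff{z^{n-1}}{^{n-1}} \bigl[ (z-\mu)^{2(k-1)} I_z[J_0] \bigr] \right|_{z=\mu}.
\]
The factor $(z-\mu)^{2(k-1)}$ vanishes to order $2(k-1)$ at $z = \mu$, so only the $n$ with $n-1 \geq 2(k-1)$, i.e. $n \geq 2k-1$, could contribute; but $n \leq k$, and $2k-1 > k$ for $k \geq 2$, so \emph{every} term in the sum vanishes and we merely get $I[q(\Delta)J_0] = 0$. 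This tells me that $q(z) = (z-\mu)^{k-1}$ alone is too crude: it kills the leading residue $a_k$ as well. The correct choice, paralleling the ansatz $h$ in Proposition~\ref{prop:complexRPVio}, is $q(z) = (z-\mu)^{k-1}\bigl( 1 + \beta(z-\mu) \bigr)$ for a real parameter $\beta$; then $(z-\mu)^{2(k-1)} q_{\text{extra}}$ has a term of degree exactly $2k-1$ in $(z-\mu)$ with coefficient proportional to $\beta$, which pairs precisely with the $a_k$-term when we differentiate $2k-1$ times. Unwinding the Leibniz rule, the only surviving contribution is
\[
I[q(\Delta)J_0] = 2\beta\, a_k \cdot (\text{positive combinatorial constant}) \cdot \left. \frac{\md^{?}}{\md z^{?}} I_z[J_0]\right|_{z=\mu} + (\text{lower-order-in-}\beta\text{ terms}),
\]
and more carefully one finds that choosing $q(z) = (z-\mu)^{k-1}$ as a first attempt fails but $q(z) = (z-\mu)^{k-1} + \gamma$ (a constant shift) works: then $q(\Delta)J_0$ has $\widetilde{q(\Delta)J_0}(p) = (p^2-\mu)^{k-1}\tilde J_0 + \gamma \tilde J_0$ and the cross-terms isolate $a_k I_\mu[J_0]$ with a coefficient linear in $\gamma$, while the $\gamma^2$ and $\gamma^0$ terms involve only higher derivatives that can be absorbed or made subleading by rescaling $\gamma$. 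Taking $\gamma$ of appropriate sign and small magnitude makes $I[q(\Delta)J_0] < 0$.

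The main obstacle I anticipate is the bookkeeping of the Leibniz expansion: one must verify that, among all the terms produced by differentiating $(z-\mu)^{k-1}(z-\bar\mu)^{k-1} q_{\text{extra}}(z)\overline{q_{\text{extra}}(\bar z)} I_z[J_0]$ exactly $n-1$ times and evaluating at $\mu$, there is a single term whose sign is freely adjustable (via the auxiliary parameter) and which dominates — i.e. is not cancelled by, and can be made larger in absolute value than, all the derivative-of-$I_z$ contributions, which are fixed real numbers once $J_0$ is chosen. Since $\mu$ is real, $\bar\mu = \mu$, so $(z-\mu)^{k-1}(z-\bar\mu)^{k-1} = (z-\mu)^{2(k-1)}$, which simplifies matters relative to the complex case but also means one cannot use the $\Im(\lambda) \neq 0$ trick; instead the linear-in-parameter factor must come entirely from $q_{\text{extra}}$. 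I would organize this by writing $I_z[J_0]$ as an analytic function of $z$ near $\mu$ (legitimate since $\mu \notin \R_+$, and the branch-cut discussion in Lemma~\ref{lm:massshell} and Figure~\ref{fig:branch} shows $I_z$ is analytic off $\R_+$), Taylor-expanding it, and matching powers of $(z-\mu)$ so that exactly the coefficient of $(z-\mu)^{k-1}$ in $q_{\text{extra}}$ multiplies $a_k \cdot \frac{1}{(k-1)!}\,\partial_z^{k-1}\big|_\mu[(z-\mu)^{2(k-1)}(\cdots)]$; choosing the free Taylor coefficient of $q_{\text{extra}}$ to flip the sign of this isolated, nonzero term completes the proof.
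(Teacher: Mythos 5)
Your final argument is correct and is essentially the paper's own proof: after the detours, your ansatz $q(z) = (z-\mu)^{k-1} + \gamma$ is, up to an overall rescaling (harmless since $I$ is quadratic in $J$), exactly the paper's choice $q(z) = \alpha(z-\mu)^{k-1} + 1$, and your computation $I[q(\Delta)J_0] = \gamma^2 I[J_0] + 2\gamma\, a_k I_\mu[J_0]$ with $\gamma$ small of the appropriate sign mirrors the paper's $I[J] + 2\alpha\, a_k I_\mu[J]$ with $\alpha$ chosen suitably. (Only the abandoned intermediate ansatz $q(z)=(z-\mu)^{k-1}(1+\beta(z-\mu))$ would have failed, since all resulting powers of $(z-\mu)$ exceed $k-1$ and every term vanishes; your constant-shift correction fixes precisely this.)
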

\begin{proof}
By Lemma \ref{lem:multiplePoles}, it is not restrictive to assume that $C$ has exactly one pole $\mu$ of order $k>1$.

Let $J \in \Sc(\R^d_{+})$ be such that $I_{\mu}[J] \neq 0$, which again is always possible in view of Lemma \ref{lm:nontriviality}.

We proceed similarly to the proof of Proposition \ref{prop:complexRPVio} and choose for $q$ a polynomial of the form
\[
 q(z)
 =
 \alpha	(z-\mu)^{k-1}
 +	1
 \text{.}
\]
Computing $I[q(\Delta)J]$, we obtain
\[
\begin{aligned}& I[q(\Delta)J]
 =
 \sum_{n=1}^{k}	a_n	\frac{1}{(n-1)!}	\left.
 \diff{z^{n-1}}{^{n-1}}	I_{z}[q(\Delta)J]
 \right|_{z=\mu}
 \\& \quad =
  \sum_{n=1}^{k}	a_n	\frac{1}{(n-1)!}	\left.
 \diff{z^{n-1}}{^{n-1}}\left(
 1
 +	2	\alpha	(z-\mu)^{k-1}
 +	\alpha^2	(z-\mu)^{2(k-1)}
 \right)	I_{z}[J]
 \right|_{z=\mu}
 \\& \quad =
 I[J]
 +	2	\alpha	\cdot a_k	I_{\mu}[J]
 \text{,}
\end{aligned}
\]
where we have used Lemma \ref{lm:ihomogen} in the first step.
In the second step, we observed that all derivatives of $(z-\mu)^{k}$ evaluated at $\mu$ vanish, except for the $(k-1)$-th.
Note that $a_{k}$ is non zero by assumption and $I_{\mu}[J]$ is non zero by construction; thus, we can choose $\alpha$ such that $I[q(\Delta)J]<0$,
which proves the claim.
\end{proof}
Finally, whenever $C$ has a real single pole with negative residue, we can use Lemma \ref{lem:multiplePoles} to find a $J$ such that $I[J]$ only depends on that pole, and the sign of the residue will automatically imply that reflection positivity is violated.
\medskip

Combined with the proof that simple real poles with non-negative residue satisfy reflection positivity, we have completed the proof of our main result:
\begin{theorem}
\label{thm:main}
	Let $C$ be a real rational function which has no poles on $\Real_+$.
A necessary and sufficient condition for the operator $C(\Delta)$ to satisfy reflection positivity is that the poles of $C$ all lie on $\R_{-}$, are simple and with non-negative residue.
\end{theorem}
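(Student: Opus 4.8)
The plan is to assemble Theorem~\ref{thm:main} from the pieces already in place, viewing the whole argument through the residue representation \eqref{eqn:integralPoleRep2}, which says that $I[J]$ is a finite sum over the poles $z_j\in\Pc$ of residue-weighted derivatives of the Klein--Gordon form $I_{z}[J]$. Since $C$ has no poles on $\R_+$ by hypothesis, $\Pc$ decomposes into real poles on $\R_-$ and genuinely complex poles occurring in conjugate pairs, and by \eqref{eqn:integralPoleRep2} combined with the block decomposition $C=\sum_\lambda C_\lambda$ one sees that $C(\Delta)$ is built additively from the elementary covariance operators $C_\mu(\Delta)$ and $C_\lambda(\Delta)$ attached to the individual poles.

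For the \emph{if} direction, suppose every pole lies on $\R_-$, is simple, and has non-negative residue. Then \eqref{eqn:integralPoleRep2} collapses to $I[J]=\sum_j a_j\,I_{\mu_j}[J]$ with $a_j=\res_{z\to\mu_j}C(z)\ge 0$ and $\mu_j\in\R_-$; this is exactly the content of Proposition~\ref{prop:if-part}, whose proof invokes the classical fact (from \cite{JaRi08,GliJa79,JaRi07I,JaRi07II}) that $I_\mu[J]\ge0$ for $\mu\in\R$ and $J\in\Sc(\R^d_+)$. Hence $I[J]\ge0$ for all such $J$, i.e. $C(\Delta)$ is reflection positive.

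For the \emph{only if} direction I would argue by contraposition: assume the stated condition fails, so $C$ has at least one pole that is either complex, or real of order $\ge2$, or real and simple with negative residue. In each case I want a single $J\in\Sc(\R^d_+)$ with $I[J]<0$. The key reduction is Lemma~\ref{lem:multiplePoles}: for any offending pole $\lambda$ one can multiply a seed function by the polynomial $q(\Delta)$ that kills all the other poles, so that $I[q(\Delta)J]=\langle q(\Delta)J,\,C_\lambda(\Delta)\,\Theta q(\Delta)J\rangle$ depends only on $\lambda$ (and $\bar\lambda$). One then applies the appropriate case: Proposition~\ref{prop:complexRPVio} handles a pole in the upper half-plane (hence any complex pole, by the conjugate-pair symmetry), Proposition~\ref{prop:RealHigOrd} handles a real pole of order $>1$, and for a real simple pole $\mu\in\R_-$ with $a=\res_{z\to\mu}C(z)<0$ one takes $J$ with $I_\mu[J]>0$ (possible by Lemmata~\ref{lm:nontriviality} and the $\R$-case positivity cited above, which gives $I_\mu[J]\ge0$, nontrivially positive for suitable $J$), so that $I[q(\Delta)J]=a\,I_\mu[q(\Delta)J]=a\,|q(\mu)|^2 I_\mu[J]<0$. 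In every case we exhibit an explicit reflection-positivity-violating test function, completing the contrapositive.

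The main obstacle is organizational rather than computational: one must be careful that the reduction in Lemma~\ref{lem:multiplePoles} interacts correctly with the homogeneity Lemma~\ref{lm:ihomogen} so that, after replacing $J$ by $q(\Delta)J$, the surviving term is genuinely $C_\lambda(\Delta)$ and the nonvanishing of $I_\lambda[J]$ (Lemma~\ref{lm:nontriviality}) is preserved — in particular that $q(\lambda)\neq0$ for the retuning polynomials chosen in Propositions~\ref{prop:complexRPVio} and \ref{prop:RealHigOrd}. Given those propositions, though, the synthesis is immediate, since the three bad cases are mutually exhaustive of the complement of the theorem's condition, and the proof amounts to concatenating Proposition~\ref{prop:if-part} with the case analysis.
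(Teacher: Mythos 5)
Your proposal is correct and mirrors the paper's own argument: the theorem is assembled exactly as you describe, with Proposition~\ref{prop:if-part} giving sufficiency and necessity following by contraposition from Lemma~\ref{lem:multiplePoles} together with Propositions~\ref{prop:complexRPVio} and \ref{prop:RealHigOrd}, plus the brief remark on simple real poles with negative residue (where, as you note, $I_\mu[q(\Delta)J]=|q(\mu)|^2 I_\mu[J]>0$ since $q$ vanishes only at the other poles). No substantive difference from the paper's proof.
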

% % % % % % % % % % % % % % % % % % % % % % % % % % % % % % % % % % % % % % % % % % % % % % % % % % % % % % % % % % % % % % % % % % % % % % % % % % % % % % % % % % % % % % % % % % % % 

\end{document}